\newtheorem{theorem}{Theorem}[section]
\newtheorem{lemma}[theorem]{Lemma}
\newtheorem{prop}[theorem]{Proposition}
\newtheorem{corollary}[theorem]{Corollary}
\newtheorem{question}[theorem]{Question}
\newtheorem{remark}[theorem]{Remark}
\theoremstyle{definition}
\newtheorem{definition}[theorem]{Definition}
\newcommand{\F}{\mathbb{F}}
\renewcommand{\S}{\mathcal{S}}
\renewcommand{\P}{\mathcal{P}}
\newcommand{\comment}[1]{}
\newcommand{\nts}[1]{}
\newcommand{\lb}{e^{\sqrt{n/2}}}
\title{A note on the minimum distance of quantum LDPC codes}
\author{Nicolas Delfosse\footnote{D\'epartement de Physique, Universit\'e de Sherbrooke, Sherbrooke, Qu\'ebec, J1K 2R1, Canada, \texttt{nicolas.delfosse@usherbrooke.ca}}, Zhentao Li\footnote{D\'epartement d'Informatique UMR CNRS 8548, \'Ecole Normale Sup\'erieure \texttt{zhentao.li@ens.fr}} and St\'ephan Thomass\'e\footnote{\'Ecole Normale Sup\'erieure de Lyon, LIP , \'Equipe MC2 {\tt stephan.thomasse@ens-lyon.fr}}}
\begin{document}

\maketitle

\begin{abstract}
We provide a new lower bound on the minimum distance of a family of quantum LDPC codes based on Cayley graphs proposed by MacKay, Mitchison and Shokrollahi \cite{MMS07}. Our bound is exponential, improving on the quadratic bound of Couvreur, Delfosse and Z\'emor \cite{CDZ13}. This result is obtained by examining a family of subsets of the hypercube which locally satisfy some parity conditions.
\end{abstract}

\section{Introduction}

%
%
%

A striking difference between classical and quantum computing is the unavoidable presence of perturbations when we manipulate a quantum system, which induces errors at every step of the computation. This makes essential the use of quantum error correcting codes. Their role is to avoid the accumulation of errors throughout the computation by rapidly identifying the errors which occur.

One of the most satisfying construction of classical error correcting codes capable of a rapid determination of the errors which corrupt the data is the family of Low Density Parity--Check codes (LDPC codes) \cite{Ga63}. It is therefore natural to investigate their quantum generalization. Moreover, Gottesman remarked recently that this family of codes can significantly reduce the overhead due to the use of error correcting codes during a quantum computation \cite{Go13}. Quantum LDPC codes may therefore become an essential building block for quantum computing.

Quantum LDPC codes have been proposed by MacKay, Mitchison, and MacFadden in \cite{MMM04}. One of the first difficulty which arises is that most of the families of quantum LDPC codes derived from classical constructions lead to a bounded minimum distance, see \cite{TZ09} and references therein. Such a distance is generally not sufficient and it induces a poor error-correction performance.

Only a rare number of constructions of quantum LDPC codes are equipped with an unbounded minimum distance. Most of them are inspired by Kitaev toric codes constructed from the a tiling of the torus \cite{Ki97} such as,  color codes which are based on 3-colored tilings of surfaces \cite{BM06:color},
hyperbolic codes which are defined from hyperbolic tilings \cite{FML02, Ze09},
or other constructions based on tilings of higher dimensional manifolds \cite{FML02, GL13}.
These constructions are based on tilings of surfaces or manifolds and their minimum distance depends on the homology of this tiling. The determination of the distance of these codes is thus based on homological properties and general bounds on the minimum distance can be derived from sophisticated homological inequalities \cite{Fe12, De13:tradeoffs}.

In this article, we study a construction of quantum LDPC codes based on Cayley graphs which has been proposed by MacKay, Mitchison and Shokrollahi \cite{MMM04} and has been studied in \cite{CDZ13}. This family does not rely on homological properties and thus the homological method cited earlier seems impossible to apply. We relate the determination of this minimum distance with a combinatorial problem in the hypercube. Then, using an idea of Gromov, we derive a lower bound on the minimum distance of these quantum codes which clearly improves the results of Couvreur, Delfosse and Z\'emor \cite{CDZ13}.

The remainder of this article is organized as follows. In Section~\ref{section:background}, we recall the definition of linear codes and a construction of quantum codes based on classical codes. Section~\ref{section:cayley_codes} introduces the quantum codes of MacKay, Mitchison and Shokrollahi \cite{MMS07}. In order to describe the minimum distance of these quantum codes based on Cayley graphs, we introduced two families of subsets of these graphs that we call borders and pseudo-borders in Section~\ref{section:cayley_border}. We are then interested in the size of pseudo-borders of Cayley graphs. In Section~\ref{section:hypercube_border}, we reduce this problem to the study of $t$-pseudo-borders of the hypercube, which are a local version of pseudo-borders. Theorem~\ref{theo:bound_pseudo-border}, proved in Section~\ref{section:main_proof}, establishes a lower bound on the size of $t$-pseudo-border. As a corollary, we derive a lower bound on the minimum distance of Cayley graphs quantum codes.

\section{Minimum distance of quantum codes}
\label{section:background}

A \emph{code of length $n$} is defined to be a subspace of $\F_2^n$. It contains $2^k$ elements, called \emph{codewords}, where $k$ is the dimension of the code. The \emph{minimum distance} $d$ of a code is the minimum Hamming distance between two codewords. By linearity, it is also the minimum Hamming weight of a non-zero codeword. This parameter plays an important role in the error correction capability of the code. Indeed, assume we start with a codeword $c$ and that $t$ of its bits are flipped. Denote by $c'$ the resulting vector. If $t$ is smaller than $(d-1)/2$ then we can recover $c$ by looking for the closest codeword of $c'$. Therefore, we can theoretically correct up to $(d-1)/2$ bit-flip errors. The parameters of a code are denoted $[n, k, d]$.

Every code can be defined as the kernel of a binary matrix $H$. This matrix is called a \emph{parity--check matrix} of the code. Alternatively, a code can be given as the space generated by the rows of a matrix. This matrix is called a \emph{generator matrix} of the code. For instance, the following parity--check matrix defines a code of parameter $[7, 4, 3]$.
\begin{equation} \label{eqn:H}
H = 
\begin{pmatrix}
1 & 0 & 1 & 0 & 1 & 0 & 1\\
0 & 1 & 1 & 0 & 0 & 1 & 1\\
0 & 0 & 0 & 1 & 1 & 1 & 1\\
\end{pmatrix}
\cdot
\end{equation}
The code which admits $H$ as a generator matrix has parameters $[7, 3, 4]$.

The space $\F_2^n$ is equipped with the inner product $(x, y) = \sum_{i=1}^n x_i y_i$, where $x=(x_1, x_2,\dots, x_n)$ and $y=(y_1, y_2, \dots, y_n)$. The orthogonal of a code $C$ of length $n$ is called the \emph{dual code} of $C$. A code is said \emph{self-orthogonal} it it is included in its dual. For example, we can easily check that two rows of the matrix $H$ given in Eq.(\ref{eqn:H}) are orthogonal which means the code generated by the rows of $H$ is self-orthogonal.

Quantum information theory studies the generalization of error correcting codes to the protection of an information written in a quantum mechanical system. By analogy with the classical setting, a quantum error correcting code is defined as the embedding of $K$ qubits into $N$ qubits. The CSS construction allows us to define a quantum code from a classical self-orthogonal code \cite{CS96, St96}. As in the classical setting, the minimum distance $D$ of a quantum code is an important parameter which measures the performance of the quantum code. The following proposition gives a combinatorial description of the parameters of these quantum codes.
\begin{prop} \label{prop:quantum_code}
Let $C$ be a classical code of parameters $[n, k, d]$.
If $C$ is self-orthogonal, then we can associate with $C$ a quantum code of parameters $[[N, K, D]]$, where $N=n$, $K= \dim C^\perp/C = n-2k$ and when $K\neq 0$, $D$ is the minimum weight of codeword of $C^\perp$ which is not in $C$:
$$
D = \min \{ w(x) \ | \ x \in C^\perp \backslash C \}.
$$
\end{prop}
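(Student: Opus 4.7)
The plan is to realise the CSS construction explicitly and then read off the three parameters directly. First, I would define the candidate code space as the subspace of $(\C^2)^{\otimes n}$ spanned by the vectors $\ket{x+C} := |C|^{-1/2} \sum_{c\in C} \ket{x+c}$ as $x$ ranges over $C^\perp$. Two such vectors coincide when $x - x' \in C$ (by reindexing the sum) and are supported on disjoint computational basis vectors otherwise, so they are either equal or orthogonal. This immediately yields $N = n$ physical qubits and code dimension $|C^\perp|/|C| = 2^{(n-k)-k}$, i.e., $K = n - 2k$ logical qubits.

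Next I would identify the stabiliser group. For every $c \in C$, set $X^c = \bigotimes_i X_i^{c_i}$ and $Z^c = \bigotimes_i Z_i^{c_i}$. Self-orthogonality $(c,c')=0$ for $c,c' \in C$ together with $x \in C^\perp$ ensures that these operators pairwise commute and each fixes every $\ket{x+C}$: $X^c$ permutes the summands and $Z^c$ contributes a trivial global phase. Hence they generate an abelian stabiliser group $S$ whose joint $+1$ eigenspace is exactly the code space.

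For the distance I would invoke the standard characterisation $D = \min\{w(P) : P \in N(S) \setminus S\}$, where $P$ ranges over Pauli operators modulo overall phases. Writing $P = X^a Z^b$, its weight is $|\mathrm{supp}(a) \cup \mathrm{supp}(b)|$; it centralises $S$ precisely when $a,b \in C^\perp$ (commutation of $X^a$ with each $Z^c$ forces $a \in C^\perp$, and symmetrically for $b$), and it lies in $S$ precisely when $a,b \in C$. Thus
$$ D = \min\{ |\mathrm{supp}(a)\cup\mathrm{supp}(b)| : a,b \in C^\perp,\ (a,b)\notin C\times C \}. $$

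Finally, I would reduce this mixed minimisation to the pure $X$-error case: taking $b=0$ and $a \in C^\perp\setminus C$ of minimum weight shows $D \le \min\{w(x) : x \in C^\perp\setminus C\}$, while for any admissible pair at least one of $a,b$ lies in $C^\perp\setminus C$, and the support-union inequality forces the weight to be at least its Hamming weight. The only nontrivial point, and the closest thing to an obstacle, is the invocation of the stabiliser-formalism characterisation of $D$; once that is granted, every step is routine bookkeeping, with self-orthogonality entering essentially only to make the generators of $S$ commute.
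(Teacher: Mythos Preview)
Your argument is correct and is essentially the standard proof of the CSS construction. However, the paper does not prove this proposition at all: it is stated as a known fact, with the sentence immediately following it making clear that the authors treat the formula for $D$ as the \emph{combinatorial definition} of the quantum minimum distance, referring the reader to Nielsen--Chuang for the derivation from the postulates of quantum mechanics. So there is nothing in the paper to compare against; you have simply supplied the proof the authors chose to outsource.

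One small remark on your write-up: when you say ``$P\in S$ precisely when $a,b\in C$'' you are implicitly working modulo the overall phase $\pm1$ (since products of the generators $X^c,Z^{c'}$ may pick up signs), which is fine for computing $D$ but worth stating explicitly. Everything else---the orthonormality of the coset states, the dimension count, the centraliser computation, and the reduction of the mixed $X^aZ^b$ minimum to the pure case via $|\mathrm{supp}(a)\cup\mathrm{supp}(b)|\ge w(a)$---is clean and would be accepted without change.
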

Throughout this article, we only consider this combinatorial definition of the parameters of quantum codes. A complete description of quantum error correcting codes, starting from the postulates of quantum mechanics, can be found for example in \cite{NC00}.

The minimum distance of the classical code $C^\perp$ is $d^\perp = \min\{ w(x) \ | \ x \in C^\perp \backslash \{0\} \}$.  If there exists a vector $x \in C^\perp \backslash \{0\}$ of minimum weight which is not in $C$, then the quantum minimum distance is $D=d^\perp$. In that case, the computation of the minimum distance corresponds to the computation of the minimum distance of the classical code $C^\perp$.

When, $D$ is strictly larger than the classical minimum distance $d^\perp$, the quantum code is said \emph{degenerated}. Then, we do not consider the codewords of $C$ in the computation of $D$. This essential feature can improve the performance of the quantum code but it also makes the determination of the minimum distance strikingly more difficult than in the classical setting. In the present work, we obtain a lower bound on the minimum distance of a family of degenerate quantum codes.

\section{A family of quantum codes based on Cayley graphs}
\label{section:cayley_codes}

We consider a family of quantum codes constructed from Cayley graphs, which we now define.

\begin{definition} \label{defi:cayley}
Let $G$ be a group and $S$ be a set of elements of $G$ such that $s \in S$ implies $s^{-1} \in S$. The  \emph{Cayley graph} $\Gamma(G, S)$ is the graph with vertex set $G$ such that two vertices are adjacent if they differ by an element in $S$.
\end{definition}

In our case, the group $G$ is always $\F_2^r$ and $S = \{c_1, c_2, \dots, c_n\}$ is a generating set of $\F_2^r$. Thus $\Gamma(G, S)$ has $2^r$ vertices and it is a regular graph of degree $n$. This graph is connected since $S$ is a generating set.
To simplify notation, we assume that the vectors $c_i$ are the $n$ columns of a matrix $H \in M_{r, n}(\F_2)$. We denote by $G(H)$ this graph and we denote by $A(H)$ the adjacency matrix of this graph.

For instance, the Cayley graph $G(I_n)$, associated with the identity matrix of size $n$, is the hypercube of dimension $n$. Indeed, its vertex set is $\F_2^n$ and two vertices $x$ and $y$ are adjacent if and only the vectors $x$ and $y$ differ in exactly one component.

The following proposition proves that we can associate a quantum code with these graphs \cite{CDZ13}.
\begin{prop} \label{prop:cayley_codes}
Let $H \in M_{r, n}(\F_2)$ be a binary matrix. If $n$ is an even integer, then the adjacency matrix $A(H)$ of the graph $G(H)$ is the generating matrix of a classical self-orthogonal code. We denote by $C(H)$ this self-orthogonal code and by $Q(H)$ the corresponding quantum code.
\end{prop}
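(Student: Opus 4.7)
The plan is to show directly that any two rows of $A(H)$ (including a row with itself) have even inner product over $\F_2$. Index the rows of $A(H)$ by vertices $x \in \F_2^r$; denote the row indexed by $x$ by $A_x$. By construction, $A_x(z) = 1$ if and only if $z \in x + S$, where $S = \{c_1, \dots, c_n\}$. Thus, for $x, y \in \F_2^r$,
$$
\langle A_x, A_y \rangle = |\{z \in \F_2^r : z \in x + S \text{ and } z \in y + S\}|
= |\{(i,j) : c_i + c_j = x + y\}|,
$$
the number of ordered pairs of columns of $H$ summing to $x+y$.

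The result then follows from a simple parity argument. First I would handle the off-diagonal case $x \neq y$: the map $(i,j) \mapsto (j,i)$ is an involution on the set $\{(i,j) : c_i + c_j = x+y\}$, and it has no fixed point because $c_i = c_j$ would force $x + y = 0$. Hence in this case the count is even without using the hypothesis on $n$.

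For the diagonal case $x = y$, the inner product reduces to $|\{(i,j) : c_i = c_j\}|$, which equals $n$ if the columns of $H$ are distinct, and more generally equals $\sum_{c \in S}(\text{mult}(c))^2$. Either way this has the same parity as $n$, so the hypothesis that $n$ is even gives exactly what is needed to conclude $\langle A_x, A_x \rangle \equiv 0 \pmod 2$.

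There is no real obstacle here: the argument is entirely combinatorial and the only subtle point is taking care of possible repeated columns of $H$, which is why the parity assumption on $n$ is invoked only for the diagonal case. Putting the two cases together shows that every pair of rows of $A(H)$ is orthogonal in $\F_2^{2^r}$, so $A(H)$ generates a self-orthogonal classical code $C(H)$, and Proposition~\ref{prop:quantum_code} then yields the associated CSS quantum code $Q(H)$.
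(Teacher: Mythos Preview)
Your argument is correct: reducing $\langle A_x, A_y\rangle$ to the count of ordered pairs $(i,j)$ with $c_i+c_j=x+y$, then using the involution $(i,j)\mapsto(j,i)$ for $x\neq y$ and the parity of $n$ for $x=y$, is exactly the standard direct proof. Note, however, that the paper does not supply its own proof of this proposition; it simply quotes the result from \cite{CDZ13}, so there is nothing to compare against beyond confirming that your approach is the natural one. One cosmetic point: your displayed equalities are equalities of integers, whereas $\langle A_x,A_y\rangle$ lives in $\F_2$; it would be cleaner to write $\langle A_x,A_y\rangle \equiv |\{(i,j):c_i+c_j=x+y\}| \pmod 2$, and your remark about repeated columns is unnecessary since the paper takes $S$ to be a set.
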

We want to determine the minimum distance of these quantum codes $Q(H)$.

Recall that a family of quantum codes associated with a family of self-orthogonal codes $(C_i)$ defines quantum LDPC codes if $C_i$ admits a parity--check matrix $H_i$ which is sparse. In our case the generating matrix of the code $C(H)$ is the adjacency matrix $A(H)$, which is typically sparse since each row is a vector of length $2^r$ and weight $n$. When $n$ and $r$ are proportional, the each row of $A(H)$ has weight in $O(log N)$, where $N=2^r$ is the number of columns of $A(H)$. Some authors consider LDPC codes defined by a parity-check matrix with bounded row weight. However, an unbouded row weight is needed for example to achieve the capacity of the quantum erasure channel \cite{DZ13}.

It turns out that regardless of the choice of generators, the resulting graph $G(H)$ is always \emph{locally isomorphic} to the hypercube \cite{CDZ13}. That is, the set of vertices within some distance $t$, depending on $H$, of a vertex in $G(H)$, is isomorphic to the subgraph of the hypercube induced by all vertices within distance $t$ of a vertex in the hypercube.

We close this subsection with a sketch of proof of this local isomorphism. By \emph{ball} of radius $t$ centered at a vertex $v$ in a graph $G$, we mean the subgraph of $G$ induced by all vertices at distance at most $t$ from $v$.

The radius of the isomorphism depends on the shortest length $d$ of a relation $\sum_{i=1}^d c_i = 0$ between columns of $H$, which is, by definition, the minimum distance of the code of parity-check matrix $H$.
\begin{prop} \label{prop:local_isomorphism}
Let $H \in M_{r, n}(\F_2)$ and let $d$ be the minimum distance of the code of parity--check matrix $H$.
Then, there if a graph isomorphism between any ball of radius $(d-1)/2$ of $G(H)$ and any ball of same radius in $G(I_n)$ where $I_n$ is the identity matrix of size $n$.
\end{prop}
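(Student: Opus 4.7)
The plan is to construct an explicit graph isomorphism via the linear map $\psi \colon \F_2^n \to \F_2^r$, $\psi(x) = Hx$. Since both $G(H)$ and $G(I_n)$ are Cayley graphs on abelian groups, they are vertex-transitive, so it suffices to exhibit an isomorphism between the balls of radius $t := (d-1)/2$ centered at the zero vertex of each graph. The key observation is that $\psi$ sends each hypercube generator $e_i$ to the Cayley generator $c_i$ of $G(H)$, so it intertwines the two graph structures: any path in the hypercube from $0$ to a vertex $x$ translates step by step into a walk of the same length in $G(H)$ from $0$ to $Hx$. Thus $\psi$ restricts to a well-defined map $B_{I_n}(0,t) \to B_{G(H)}(0,t)$, and I claim this restriction is the desired graph isomorphism.

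The first thing to check is that the restriction is a bijection on vertices. Surjectivity is immediate: any vertex of $B_{G(H)}(0,t)$ is a sum $c_{i_1}+\dots+c_{i_s}$ with $s \le t$, which equals $Hx$ for $x = e_{i_1}+\dots+e_{i_s}$, a vector of weight at most $s \le t$. Injectivity is where the hypothesis on $d$ first enters: if $Hx = Hy$ with $w(x), w(y) \le t$, then $x+y$ lies in $\ker H$ and has weight at most $2t \le d-1 < d$, so the definition of $d$ forces $x+y = 0$. Next I would verify that $\psi$ preserves adjacency in both directions. The forward implication is immediate from linearity: if $x+y = e_j$ then $\psi(x)+\psi(y) = c_j \in S$, so $\psi(x) \sim \psi(y)$ in $G(H)$. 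For the converse, if $\psi(x) \sim \psi(y)$ in $G(H)$ then $H(x+y) = c_j = He_j$ for some $j$, so $x+y+e_j$ lies in $\ker H$ with weight at most $2t+1$.

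The main obstacle is exactly this backward edge-preservation step, since the bound $2t+1$ comes right up against the minimum distance $d$: either $x+y+e_j = 0$, which yields the desired hypercube adjacency $x+y = e_j$, or $x+y+e_j$ is a nonzero codeword of weight in the interval $[d, 2t+1]$. The second alternative has to be excluded to finish the proof, which I would do by a support-disjointness argument: any such extremal codeword would force $w(x) = w(y) = t$ with $\mathrm{supp}(x)$, $\mathrm{supp}(y)$, and $\{j\}$ pairwise disjoint, and I would argue this configuration is incompatible with the ball-radius hypothesis (sharpening, if necessary, to $2t+1 < d$ in the boundary case $d$ odd). Once this delicate step is settled, $\psi$ is simultaneously a vertex bijection and edge-preserving in both directions, i.e., a graph isomorphism between the two balls, completing the proof.
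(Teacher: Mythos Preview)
Your map $\psi(x)=Hx$ is exactly the one implicit in the paper's sketch (send $e_i$ to $c_i$), and your verification of vertex bijectivity and forward edge preservation is correct and considerably more careful than what the paper offers. The gap is in your last paragraph. The extremal configuration you isolate --- $w(x)=w(y)=t$ with $\mathrm{supp}(x)$, $\mathrm{supp}(y)$, $\{j\}$ pairwise disjoint --- is \emph{not} incompatible with the ball-radius hypothesis: both $x$ and $y$ sit on the sphere of radius $t$ and hence in the ball. Nor can the parenthetical ``sharpening to $2t+1<d$'' save you, since for odd $d$ and $t=(d-1)/2$ one has $2t+1=d$ identically. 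In fact the proposition, read literally as an isomorphism of induced subgraphs, fails at this boundary: take $H=\begin{pmatrix}1&1&0\\0&1&1\end{pmatrix}$, so $d=3$ and $t=1$. Then $G(H)$ is the complete graph $K_4$ on $\F_2^2$ and its ball of radius $1$ is all of $K_4$, while the ball of radius $1$ in the $3$-cube is the star $K_{1,3}$.

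The paper's proof is only a sketch and does not confront this boundary case either. What is actually used later (in the proof of Corollary~\ref{cor:bound_distance}) is weaker: one needs that $\psi$ is a vertex bijection on the ball of radius $t$ and that for every $S$ with $|S|\le t-1$ it carries the hypercube neighbourhood $N(S)$ bijectively onto the $G(H)$-neighbourhood $N(\psi(S))$. This you do get: any $v\in N(\psi(S))$ satisfies $v=\psi(S)+c_j=\psi(S\Delta\{j\})$ with $|S\Delta\{j\}|\le t$, and injectivity on the ball finishes it. Equivalently, your backward-edge argument goes through as soon as one endpoint has weight $\le t-1$, since then $w(x+y+e_j)\le 2t<d$. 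So the honest fix is either to reduce the radius by one in the statement, or to replace ``graph isomorphism'' by this interior-neighbourhood statement; your proof then becomes complete.
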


\begin{proof}[Proof sketch]
The generating set $S$ is the set of columns of $H$. The neigbourhood of a vertex $x$ is $\{x+c_1| c_1 \in S\}$ and the second neighbourhood is $\{x+c_1+c_2| c_1 \neq c_2 \in S\}$.
If there is no relation between set of 4 different generators (i.e. there is no relation where the sum of four generators is $0$) then except for $x+c_1+c_2$ and $x+c_2+c_1$ begin equal, all these vertices are distinct. So by mapping $x$ to $0$, $c_i$ to $e_i$ and $c_i+c_j$ to $e_i+e_j$, we see that vertices at distance at most two from $x$ is isomorphic to the hypercube.

In the general case, we can also map each generator of $G(H)$ to a generator of $G(I_n)$.
\end{proof}

This result gives more information about the local structure of the graph when we start with a parity--check matrix $H$ defining a code of large distance $d$. For instance, when $H \in M_{r, n}(\F_2)$ is a random matrix chosen uniformly among the matrices of maximal rank, the distance $d$ is linear in $n$. We can also choose $H$ as the parity-check matrix of a known code equipped with a large minimum distance.

Our result will be proved by only looking at vertices within distance $(d-1)/2$ of some central vertex and hence we may assume that we are in the hypercube of dimension $n$.

\section{Borders and pseudo-borders of Cayley graphs}
\label{section:cayley_border}

The aim of this section is to provide a graphical description of the minimum distance of the quantum codes $Q(H)$ introduced in the previous section. This quantum code is associated with the classical self-orthogonal code $C(H)$ generated by the rows of the matrix $A(H)$. By Proposition~\ref{prop:quantum_code}, the minimum distance $D$ of this quantum code is given by
$$
D = \min \{ w(x) \ | \ x \in C(H)^\perp \backslash C(H) \}.
$$
By definition, the codes $C(H)$ and its dual $C(H)^\perp$ are subspaces of $\F_2^N$ where $N = 2^r$ is the size of the matrix $A(H)$. Since the columns of $A(H)$ are indexed by the $N$ vertices of the graph $G(H)$, a vector $x \in \F_2^N$ can be regarded as the indicator vector of a subset of the vertex set of $G(H)$. We can then replace the two conditions $x \in C(H)^\perp$ and $x \notin C(H)$ by conditions on the set of vertices corresponding to $x$. In order to describe the vectors $x$ of $C(H)$ and $C(H)^\perp$, we introduced two families of subsets of the vertex set of $G(H)$: the borders and the pseudo-borders.

The neighbourhood $N(v)$ of a vertex $v$ of the graph $G(H)$ is the set of vertices incident to $v$.
\begin{definition} \label{defi:border}
Let $\S$ be a subset of the vertex set of $G(H)$. The \emph{border} $B(\S)$ of $\S$ in the graph $G(H)$ is the set of vertices of $G(H)$ which belong to an odd number of neighbourhoods $N(v)$ for $v \in \S$.
\end{definition}
Equivalently, the border of a subset $\S$ is the symmetric difference of all the neigborhoods $N(v)$ for $v \in \S$.

\begin{definition} \label{defi:pseudo-border2}
A \emph{pseudo-border} in the graph $G(H)$ is a family $\P$ of vertices of $G(H)$ such that the cardinality of $N(v)\cap \P$ is even for every vertex $v$ of $G(H)$.
\end{definition}

These borders and pseudo-borders correspond to the vectors of the classical code $C(H)$ and its dual.
\begin{prop} \label{prop:correspondance}
Let $x$ be a vector of $\F_2^N$ where $N = 2^r$ is the number of vertices of $G(H)$. Then $x$ is the indicator vector of a subset $\S_x$ of the vertex set of $G(H)$. Moreover, we have
\begin{itemize}
\item $x \in C(H)$ if and only if $\S_x$ is border,
\item $x \in C(H)^\perp$ if and only if $\S_x$ is pseudo-border.
\end{itemize}
\end{prop}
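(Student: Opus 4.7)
The plan is to unpack the two definitions in terms of the rows of $A(H)$ and then read off both equivalences from standard linear-algebra facts about codes defined by a generator matrix.

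First I would fix notation. Write the vertex set of $G(H)$ as $V = \F_2^r$, so $A(H)$ is an $N \times N$ matrix indexed by $V \times V$, and recall that the row of $A(H)$ indexed by $v$ is precisely the indicator vector $\mathbf{1}_{N(v)}$ of the neighbourhood of $v$. The code $C(H)$ is by definition the row-span of $A(H)$, so a vector $x \in \F_2^N$ lies in $C(H)$ if and only if there exists $\S \subseteq V$ with $x = \sum_{v \in \S} \mathbf{1}_{N(v)}$, where the sum is taken in $\F_2^N$. Since addition in $\F_2$ corresponds to symmetric difference of sets, this sum is exactly the indicator vector of the symmetric difference of the neighbourhoods $N(v)$ for $v \in \S$, which by the remark following Definition~\ref{defi:border} equals $B(\S)$. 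So $x \in C(H)$ iff $\S_x = B(\S)$ for some $\S$, i.e.\ iff $\S_x$ is a border. This proves the first bullet.

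For the second bullet I would use the dual characterisation: $x \in C(H)^\perp$ iff $x$ is orthogonal to every generator of $C(H)$, i.e.\ to every row $\mathbf{1}_{N(v)}$ of $A(H)$. For any vertex $v$,
\[
(x, \mathbf{1}_{N(v)}) = \sum_{u \in N(v)} x_u = |N(v) \cap \S_x| \bmod 2,
\]
so $x \in C(H)^\perp$ iff $|N(v) \cap \S_x|$ is even for every $v \in V$, which is exactly the condition that $\S_x$ be a pseudo-border by Definition~\ref{defi:pseudo-border2}.

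There is essentially no obstacle here: the statement is a direct translation between the vector-space language (rows and orthogonality in $\F_2^N$) and the set-theoretic language (symmetric differences and parities of neighbourhood intersections). The only thing worth being careful about is that the adjacency matrix is symmetric, so "row indexed by $v$" and "column indexed by $v$" give the same vector $\mathbf{1}_{N(v)}$; this is what makes the two bullets look so symmetric and is implicit in the claim that $C(H)$ is self-orthogonal from Proposition~\ref{prop:cayley_codes}.
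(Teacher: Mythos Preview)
Your proof is correct and follows essentially the same argument as the paper's: both identify the rows of $A(H)$ with the indicator vectors $\mathbf{1}_{N(v)}$, translate $\F_2$-sums into symmetric differences to handle the first bullet, and translate orthogonality with each row into the parity condition on $|N(v)\cap \S_x|$ for the second. The added remark about symmetry of $A(H)$ is a harmless side observation not present in the paper but not needed for the argument.
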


\begin{proof}
By definition of the code $C(H)$, we have $x \in C(H)$ if and only if it is a linear combination of the rows of $A(H)$: $x = \sum_{i=1}^N \lambda_i r_i$.
Since $A(H)$ is the adjacency matrix of the graph $G(H)$, each row $r_i$ is the indicator vector of the neighbourhood $N(v_i)$ of a vertex $v_i$ of $G(H)$.
Then, remark that the sum $x+y$ of two vectors of $\F_2^N$ is the indicator vector of the symmetric difference of the sets $S_x$ and $S_y$, i.e. $S_{x+y} = S_x \Delta S_y$.
This proves that the vector $x = \sum_i \lambda_i r_i$ correspond to the symmetric difference of the neighbourhoods $N(v_i)$ such that $\lambda_i = 1$. The vector $x$ is indeed a border. The proof of the inverse implication is similar.

To prove the second property, it suffices to remark that $x \in C(H)^\perp$ if and only if $x$ is orthogonal to every row of $A(H)$. Then, the orthogonality between $x$ and the rows $r_i$ is equivalent to the fact that the set $S_x$ contains an even number of vertices of the set $N(v_i)$. Recall that $r_i$ is the indicator vector of the set $N(v_i)$. This proves the proposition.
\end{proof}

This proposition, combined with Proposition~\ref{prop:cayley_codes}, shows that, when $n$ is even, every border is a pseudo-border. In some special cases, every pseudo-border is a border. However, this is generally not true. When the graph $G(H)$ contains pseudo-borders which are not borders, the minimum distance of the quantum code associated with $H$ is equal to
$$
D = \min\{ |\S| \ | \ \S \text{ is a pseudo-border which is not a border } \}.
$$
Since the graph $G(H)$ is locally isomorphic to the hypercube of dimension $n$, (Proposition~\ref{prop:local_isomorphism}), it is natural to investigate the borders and pseudo-borders of the hypercube.

\section{Borders and pseudo-borders of the hypercube}
\label{section:hypercube_border}

In this section, we focus on the hypercube and we introduce a local version of the pseudo-borders which is preserved by the local isomorphism of Proposition~\ref{prop:local_isomorphism}.

Earlier, the hypercube appeared as the Cayley graph $G(I_n)$. In what follows, we use an alternative definition of this graph. Let $n$ be some integer. We denote by $[n]$ the set $\{1,2,\ldots, n\}$. This set can be regarded as the vertex set of a hypercube as represented in Figure~\ref{fig:hypercube}. The following definition is inspired by this graphical structure. Let $S$ be a subset of $[n]$, we denote by $N(S)$ the \emph{neighbourhood} centered at $S$, i.e. the family of all subsets of $[n]$ that differ by one element from $S$.

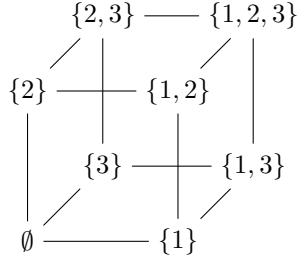
\begin{figure}[htbp]
\begin{center}
\begin{tikzpicture}
    \node (o) at ( 0,-2) {$\emptyset$};
    \node (x) at ( 2,-2) {$\{1\}$};
    \node (y) at ( 1,-1) {$\{3\}$};
    \node (z) at ( 0, 0) {$\{2\}$}; 
    \node (xy) at ( 3,-1) {$\{1,3\}$};
    \node (xz) at ( 2,0) {$\{1,2\}$};
    \node (yz) at ( 1,1) {$\{2,3\}$};
    \node (xyz) at ( 3,1) {$\{1,2,3\}$};
    
    \begin{scope}
       \draw (o) -- (x);
       \draw (o) -- (y); 
       \draw (o) -- (z);
       \draw (x) -- (xy);
       \draw (x) -- (xz);
       \draw (y) -- (yz);
       \draw (y) -- (xy);
       \draw (z) -- (xz);
       \draw (z) -- (yz);
       \draw (xy) -- (xyz);
       \draw (xz) -- (xyz);
       \draw (yz) -- (xyz);

    \end{scope}  
\end{tikzpicture}
\caption{The hypercube based on the set $[3]$}
\label{fig:hypercube}
\end{center}
\end{figure}

Before starting with the definitions, let us recall a simple lower bound on the cardinality of a pseudo-border $\S$ which is not a border in $G(H)$. Assume that the parameter $d$ is larger than 7, so that every ball of radius $(d-1)/2=3$ of the graph $G(H)$ is isomorphic to a ball of the hypercube of dimension $n$ by Proposition~\ref{prop:local_isomorphism}. The pseudo-border $\S$ is not empty, otherwise it is a border. Therefore, it contains  a vertex $u$ of $G(H)$. We use the following arguments:
\begin{itemize}
\item $u \in \S$,
\item The cardinality of $\S \cap N(v)$ is even for every neighbour $v$ of the vertex $u$,
\item The ball $B(u, 3)$ of the graph $G(H)$ is isomorphic to a ball of the hypercube of dimension $n$.
\end{itemize}
Then, each of the $n$ neighbourhoods $N(v)$ centered at $v \in N(u)$, contains the vertex $u$, thus it must contain at least another vertex of $\S$. This provides $n$ other vertices of the set $\S$. Since these vertices can appear in at most two different sets $N(v)$, the set $\S$ contains at least $1+n/2$ distinct vertices.

In order to extent this argument to a larger ball of the hypercube, we introduce $t$-pseudo-borders.
\begin{definition} \label{defi:t-pseudo-border}
Let $t$ and $n$ be two positive integers such that $t<n$. A $t$-pseudo-border $\S$ of the hypercube $[n]$ is a family of subset of $[n]$ such that
\begin{itemize}
\item $\emptyset \in \S$,
\item The cardinality of $\S \cap N(S)$ is even for every $S \subset [n]$ of size $|S| \leq t-1$,
\item $\S$ is included in the ball of radius $t$ centered in $\emptyset$ of the hypercube $[n]$.
\end{itemize}
\end{definition}

In other words, a $t$-pseudo-border is a subset of vertices of a ball of the hypercube satisfying the conditions of the definition of a pseudo-border in this ball. Starting from a pseudo-border of a Cayley graph $G(H)$ and applying the local isomorphism of Proposition~\ref{prop:local_isomorphism}, we obtain a $t$-pseudo-border of the hypercube. We are interested in a lower bound on the size of $t$-pseudo-borders.
Thus, we aim to answer the following refinement of the question of the determination of the minimum distance of the quantum codes $Q(H)$.

\begin{question} \label{q:hypercube}
What is the minimum cardinality of a $t$-pseudo-border of the hypercube $[n]$?
\end{question}

The results of \cite{CDZ13} provide a polynomial lower bound in $O(tn^2)$. To our knowledge, this is the best known lower bound on the cardinality of the $t$-pseudo-border of the hypercube. 
Our main result is an exponential lower bound.

\begin{theorem} \label{theo:bound_pseudo-border}
The minimum cardinality of a $t$-pseudo-border of the hypercube $[n]$ is at least 
$$
\sum_{i=0}^{i \leq M}\frac{(n/2)^{i/2}}{i!},
$$
where $M = \min\{t-1, \sqrt{n/2}\}$.
When $t$ is larger than $\sqrt{n/2}$, this lower bound is at least $\lb$.
\end{theorem}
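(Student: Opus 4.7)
The plan is to argue level by level on the sizes $a_k = |\S \cap \binom{[n]}{k}|$, for which we know $a_0 = 1$ from $\emptyset \in \S$. Summing the defining parity condition of Definition~\ref{defi:t-pseudo-border} over all $T \in \binom{[n]}{k}$ gives, for each $0 \leq k \leq t-1$, the identity
$$
(k+1)\,a_{k+1} + (n-k+1)\,a_{k-1} = \sum_{T \in \binom{[n]}{k}} |\S \cap N(T)|,
$$
in which every summand on the right is a nonnegative even integer. This identity is the starting point for every later estimate.

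The heart of the argument is a recursive inequality on the $a_i$. The target bound $a_i \geq (n/2)^{i/2}/i!$ is equivalent to the one-step relation $(i+1)\,a_{i+1} \geq \sqrt{n/2}\,a_i$, or the two-step form $(k+1)(k+2)\,a_{k+2} \geq (n/2)\,a_k$. I would derive a recursion of this type by exploiting the parity condition at $(k+1)$-sets: whenever the $\F_2$-down-shadow of $\S_k$ on a $(k+1)$-set $T$ is odd, the condition forces the $\F_2$-up-count of $T$ in $\S_{k+2}$ to be odd and hence nonempty, and a careful double counting of such pairs $(T, T')$ should translate into a multiplicative lower bound on $a_{k+2}$ in terms of $a_k$. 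To obtain the correct constant $n/2$ rather than the naive $n-2k$, it will be essential to use an argument in the spirit of Gromov's coning idea, plausibly an averaging construction over a distinguished coordinate $i \in [n]$ that transfers the two-level parity identity into a single-level growth estimate on the link $\{S \setminus \{i\} : i \in S \in \S\}$.

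Once the recursion is in place, iteration from $a_0 = 1$ and summation over $0 \leq i \leq M$ yields the claimed bound $|\S| \geq \sum_{i=0}^{M} (n/2)^{i/2}/i!$. For $t > \sqrt{n/2}$, comparing the partial sum $\sum_{i=0}^{\sqrt{n/2}} (\sqrt{n/2})^{\,i}/i!$ with the Taylor expansion of $e^{\sqrt{n/2}}$ produces the lower bound $\lb$ in the required regime.

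The main obstacle will be establishing the recursion with the correct constant. Because the parity identity intrinsically couples $a_{k+1}$ with $a_{k-1}$, extracting a clean multiplicative growth estimate between consecutive or next-to-consecutive levels is not immediate and requires the Gromov-style reduction described above. A further subtlety is the treatment of the odd-indexed levels: since $a_1 = 0$ is admissible, the odd-level contributions cannot be obtained simply by iterating a recursion starting from $a_1$, and must instead be extracted either from a strengthened even-level estimate that is forced precisely when $a_1$ is small, or from an interaction argument linking $a_k$ to both $a_{k-1}$ and $a_{k-2}$ at once.
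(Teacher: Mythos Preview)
Your outline correctly identifies the target two-step recursion $(k+1)(k+2)\,a_{k+2}\ge (n/2)\,a_k$ and the role of odd $(k{+}1)$-sets (your ``$\F_2$ down-shadow'' observation is exactly the easy direction, Lemma~\ref{oddsettosets}). The genuine gap is the other half: you never say how to lower-bound the number of odd $(k{+}1)$-sets in terms of $a_k$, and the parity identities you wrote down do not do this by themselves. The summed identity $(k+1)a_{k+1}+(n-k+1)a_{k-1}=\sum_T |\S\cap N(T)|$ only tells you that the right-hand side is even; it gives no growth. More seriously, for a \emph{fixed} collection of $k$-sets the $\F_2$ up-shadow can vanish entirely (take any coboundary), so no argument that uses only the parity constraints on an arbitrary $t$-pseudo-border can produce the inequality you want.

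What the paper actually uses, and what your sketch is missing, is a \emph{minimality-plus-flipping} argument. One first observes (Lemma~\ref{lemma:flip-min}) that for any $S$ with $2\le |S|\le t-1$ the symmetric difference $\S\Delta N(S)$ is again a $t$-pseudo-border; hence a minimum-cardinality $\S$ cannot shrink under any sequence of such flips. Then one picks a coordinate $v$ by averaging (this is the coning you allude to), flips along all sets in $E_v=\{S\setminus\{v\}: v\in S\in\S,\ |S|=k\}$, and checks that the new family has exactly $F_v$ as its level-$k$ part (where $F_v$ indexes the odd $(k{+}1)$-sets through $v$). Minimality now gives $|E|\le |F_v|+(k-1)|E_v|$, and averaging over $v$ yields $|F|\ge \frac{n-(k-1)k}{k+1}\,|E|$ (Theorem~\ref{setstooddsets}). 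Your ``link at a distinguished coordinate'' is the right object, but without the flip-invariance of the $t$-pseudo-border property and the minimality comparison there is no inequality to extract. As for the odd levels you flag: the paper simply proves the bound for even $k$ only (Lemma~\ref{cumulative-bound}) and gets the headline $e^{\sqrt{n/2}}$ from the single term at $k=\sqrt{n/2}$, so no separate treatment of odd $k$ is needed.
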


This Theorem is proved in Section~\ref{section:main_proof}. As an application, we obtain a lower bound on the minimum distance of the Cayley graph quantum codes $Q(H)$.

\begin{corollary} \label{cor:bound_distance}
Let $H \in M_{r, n}(\F_2)$, with $n$ an even integer, and let $d$ be the minimum distance of the code of parity--check matrix $H$. The quantum code $Q(H)$ encodes $K$ qubits into $N=2^r$ qubits. If $K\neq 0$, then the minimum distance $D$ of $Q(H)$ is at least
$$
D \geq \sum_{i=0}^{i \leq M}\frac{(n/2)^{i/2}}{i!},
$$
where $M = \min\{(d-3)/2, \sqrt{n/2}\}$.
When $d$ is larger than $\sqrt{n/2}$, this lower bound is at least $\lb$.
\end{corollary}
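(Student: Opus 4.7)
The plan is to combine the graphical characterization of the minimum distance given at the end of Section~\ref{section:cayley_border} with the local isomorphism of Proposition~\ref{prop:local_isomorphism}, in order to reduce the problem to lower bounding the size of a $t$-pseudo-border of the hypercube, to which Theorem~\ref{theo:bound_pseudo-border} then directly applies.

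Since $K \neq 0$, the code $Q(H)$ satisfies $D = \min\{|\S| : \S \text{ is a pseudo-border which is not a border in } G(H)\}$. The empty set is trivially a border (the symmetric difference of no neighbourhoods), so any such minimum $\S$ contains at least one vertex $u$. The first step is to localize around $u$: set $t = (d-1)/2$ and consider $\S' = \S \cap B(u,t)$, where $B(u,t)$ denotes the ball of radius $t$ in $G(H)$. By Proposition~\ref{prop:local_isomorphism}, $B(u,t)$ is graph-isomorphic to the ball of radius $t$ centred at $\emptyset$ in the hypercube, via an isomorphism sending $u$ to $\emptyset$. Transporting $\S'$ through this isomorphism yields a subset of the ball of radius $t$ in the hypercube which contains $\emptyset$.

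The key verification is that the transported set satisfies the parity condition of Definition~\ref{defi:t-pseudo-border}, namely that $|\S' \cap N(S)|$ is even for every $S \subseteq [n]$ with $|S| \leq t-1$. In the Cayley graph picture, this corresponds to asking that $|N(v) \cap \S'|$ is even for every vertex $v$ at distance at most $t-1$ from $u$. For such a vertex $v$, every neighbour of $v$ lies in $B(u,t)$, hence $N(v) \cap \S = N(v) \cap \S'$; and since $\S$ is a pseudo-border of $G(H)$, the global parity condition ensures that $|N(v) \cap \S|$ is even. So the transported set is indeed a $t$-pseudo-border of the hypercube.

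The conclusion is then immediate: $|\S| \geq |\S'|$, and Theorem~\ref{theo:bound_pseudo-border} applied with this $t$ gives the lower bound stated in the corollary, since $t - 1 = (d-3)/2$ matches the parameter appearing in the definition of $M$. Beyond the already nontrivial content of Theorem~\ref{theo:bound_pseudo-border} itself, there is no real mathematical obstacle here; the only subtlety worth checking carefully is the bookkeeping that ensures that the parity condition is imposed only on centres of size at most $t-1$, so that their neighbourhoods stay inside the ball and the pseudo-border property of $\S$ transfers faithfully to $\S'$. This is exactly why a ball radius of $(d-1)/2$ in $G(H)$ produces $(d-3)/2$ in the final bound.
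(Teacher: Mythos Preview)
Your proof is correct and follows the same approach as the paper: reduce to a pseudo-border which is not a border, localize around a vertex $u$ using the ball of radius $(d-1)/2$, transport via the local isomorphism to obtain a $(d-1)/2$-pseudo-border of the hypercube, and invoke Theorem~\ref{theo:bound_pseudo-border}. If anything, your verification that the parity condition survives the restriction to $B(u,t)$ is more explicit than the paper's, which simply asserts the resulting set is a $(d-1)/2$-pseudo-border.
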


\begin{proof}
We want to bound the minimum distance $D$ of $Q(H)$, which, by Proposition~\ref{prop:quantum_code}, is the minimum weight of a vector $x$ of $C(H)^\perp \backslash C(H)$. Such a vector $x$ corresponds to a subset $\S_x$ of vertices of the graph $G(H)$ which is a pseudo-border and which is not a border by Proposition~\ref{prop:correspondance}. By this bijection $x\rightarrow \S_x$, the weight of $x$ corresponds to the cardinality of the set $\S_x$. Therefore, $D$ is the minimum cardinality of a pseudo-border of $G(H)$ which is not a border.

First, let us prove that such a pseudo-border exists. By Proposition~\ref{prop:quantum_code}, the number of encoded qubits $K$, which is assumed to be positive, is the dimension of the quotient space $K = \dim C(H)^\perp/C(H)$. We know, from Proposition~\ref{prop:correspondance}, that $C(H)^\perp$ and $C(H)$ are in one-to-one correspondence with the sets of pseudo-borders and borders respectively. Therefore the positivity of $K$ implies the existence of a pseudo-border $\S$ of $G(H)$ which is not a border.

Since $\S$ is not a border it is not empty. Let $u$ be a vertex of $\S$ in $G(H)$. Applying the local isomorphism of Proposition~\ref{prop:local_isomorphism}, we can map the ball of radius $(d-1)/2$ centered at $u$ of $G(H)$ to the ball of same radius centered at $\emptyset$ of the hypercube $[n]$. By this transformation, the restriction of $\S$ to the ball $B(u, (d-1)/2)$ is sent onto a $(d-1)/2$-pseudo-border $\S_u$ of the hypercube $[n]$.
This graph isomorphism cannot increase the size of $\S$ thus $D$ is lower bounded by the minimum size of a $(d-1)/2$-pseudo-border of $[n]$:
$$
D \geq |\S| \geq |\tilde \S| \geq \sum_{i=0}^{i \leq M}\frac{(n/2)^{i/2}}{i!},
$$
where $M = \min\{(d-3)/2, \sqrt{n/2}\}$.
The last inequality is derived from Theorem~\ref{theo:bound_pseudo-border}.
\end{proof}


\section{Bound on the size of local pseudo-borders of the hypercube}
\label{section:main_proof}

This section is devoted to the proof of Theorem~\ref{theo:bound_pseudo-border}. So our goal is to derive a lower bound on the size of $t$-pseudo-borders of the hypercube. Here it is more convenient to use the language of sets. We work in the hypercube $[n]$ defined in Section~\ref{section:hypercube_border}, whose vertices correspond to subsets of $[n]=\{1,2,\ldots, n\}$.

Our goal is to obtain a lower bound on the minimum cardinality of a $t$-pseudo-border. We consider a $t$-pseudo-border $\S$ of minimum size. In order to exploit the minimality of this set, we introduce the following operation.
\begin{definition} \label{defi:flipping}
Let $\S$ be a family of subsets of $[n]$ and let $S$ be a subset of $[n]$. By \emph{flipping} $\S$ along $S$, we mean to swap elements and non-elements of $\S$ in the border $N(S)$.
\end{definition}
Put differently, flipping $\S$ along $S$ gives the symmetric difference $\S \Delta N(S)$.
The following lemma is a key ingredient of the proof of the main result. It proves that flipping a minimum $t$-pseudo-border cannot decrease its size.
\begin{lemma} \label{lemma:flip-min}
Let $\S$ be a $t$-pseudo-border of $[n]$ of minimum cardinality. Then, flipping $\S$ along any number of subsets $S$ such that $2 \leq |S| \leq t-1$ leads to a family of greater or equal cardinality.
\end{lemma}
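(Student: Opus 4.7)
The plan is to show that the family $\S'$ obtained by flipping $\S$ along any collection $S_1,\dots,S_k$ of subsets with $2 \le |S_i| \le t-1$ is itself a $t$-pseudo-border of $[n]$. The conclusion then follows at once from the assumed minimality of $\S$: it forces $|\S'|\ge|\S|$.

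Writing $\S' = \S \,\Delta\, N(S_1) \,\Delta\, \cdots \,\Delta\, N(S_k)$, I would verify the three defining conditions of Definition~\ref{defi:t-pseudo-border} in turn. For the first condition $\emptyset \in \S'$: the only sets adjacent to $\emptyset$ in $[n]$ are the singletons, and since $|S_i|\ge 2$ we have $\emptyset\notin N(S_i)$ for every $i$, so the flips do not alter the membership of $\emptyset$. For the third condition $\S'\subseteq B(\emptyset,t)$: the neighbourhood $N(S_i)$ consists of sets of size $|S_i|\pm 1\le t$, so $N(S_i)\subseteq B(\emptyset,t)$, and combined with $\S\subseteq B(\emptyset,t)$ the symmetric difference $\S'$ is also contained in $B(\emptyset,t)$.

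The main point is the parity condition. For any $T\subseteq[n]$ with $|T|\le t-1$, expanding the symmetric difference gives
$$
|\S' \cap N(T)| \;\equiv\; |\S \cap N(T)| + \sum_{i=1}^{k} |N(S_i)\cap N(T)| \pmod 2.
$$
The first term is even by hypothesis on $\S$. For the intersections, a set $R$ lies in $N(S_i)\cap N(T)$ iff $|R\Delta S_i|=|R\Delta T|=1$, which forces $|S_i\Delta T|\in\{0,2\}$; a short enumeration then gives $|N(S_i)\cap N(T)|$ equal to $n$ when $S_i=T$, to $2$ when $|S_i\Delta T|=2$, and to $0$ otherwise.

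The only delicate case is $S_i=T$, which contributes $n$ to the sum above; here the standing assumption that $n$ is even (forced by the Cayley code construction, see Proposition~\ref{prop:cayley_codes}) is what guarantees the contribution is even. The remaining contributions are automatically even, so $|\S'\cap N(T)|$ is even for every admissible $T$. Once the three conditions are checked, $\S'$ is a $t$-pseudo-border and the minimality of $|\S|$ yields $|\S'|\ge|\S|$, as required.
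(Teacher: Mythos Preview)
Your proof is correct and follows essentially the same approach as the paper: show that flipping along admissible $S$ preserves the $t$-pseudo-border property, then invoke minimality. The only cosmetic difference is that the paper reduces to a single flip (observing that one flip of a $t$-pseudo-border yields another $t$-pseudo-border, so iterated flips do too), whereas you treat all flips at once; the core computation of $|N(S_i)\cap N(T)|\in\{0,2,n\}$ and the reliance on $n$ even are identical.
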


\begin{proof}
Since $\S$ is minimal among $t$-pseudo-borders, it suffices to show the symmetric difference of any $t$-pseudo-border $\S'$ with one neighbourhood $N(S)$ with $2 \leq |S| \leq t-1$ is still a $t$-pseudo-border. Since $|S| \geq 2$, $\emptyset$ is unaffected by this flip and remains in $\S'\Delta N(S)$. Since $|S| \le t-1$, elements outside the ball of radius $t$ centered at $\emptyset$ are unaffected by the flip and so $\S'\Delta N(S)$ is contained in this ball. It remains to see that $|(\S' \Delta N(S)) \cap N(T)|$ is even for every subset $T$ of $[n]$ of size $|T| \leq t-1$. We have $(\S' \Delta N(S)) \cap N(T) = (\S' \cap N(T)) \Delta (N(S) \cap N(T))$ and we know that $|A \Delta B| = |A| + |B| - 2|A \cap B| \equiv |A| + |B| \pmod 2$. Here $|\S' \cap N(T)|$ is even because $\S$ is a $t$-pseudo-border and $|N(S) \cap N(T)|$ is even since it is either $n$ (if $S=T$), 2 (if $S = T \cup \{x,y\}$, $T = S \cup \{x,y\}$ or $S = T-\{x\} \cup \{y\}$ for some $x,y \in [n]$) or 0. This proves that $\S' \Delta N(S)$ is a $t$-pseudo-border.
\end{proof}

We are therefore interested in minimal $t$-pseudo-borders, i.e. $t$-pseudo-borders of minimum cardinality.

\subsection{Lower bounds for 2-subsets and 4-subsets}

In Section~\ref{section:hypercube_border}, we showed a first lower bound on the size of pseudo-borders. The basic idea is to use the subset $\emptyset$ of $\S$ and the neighbourhood $N(S)$ centered in sets $S$ of size 1, to derive the existence of other subsets of $\S$ at distance $2$ from $\emptyset$. It is reasonable to expect that these subsets of $\S$ at distance 2 from $\emptyset$ will imply the existence of subsets of $\S$ at distance 4 from $\emptyset$ and so on. However, as the distance to $\emptyset$ increases, the local structure of the graph becomes more and more complicated and this problem is made more complex.

In this section, as an example, we give a lower bound on the number of 4-subsets of a minimal $t$-pseudo-borders. The same tools are then used in Section \ref{sect:generallb} to bound the number of $k$-subsets  of a minimal $t$-pseudo-border.

\begin{definition}
  A \emph{$k$-set} of a set $S$ is a subset of $S$ of size $k$.
\end{definition}

These $k$-sets are use to decompose the hypercube into layers. The set of $k$-subsets corresponds to the vertices at distance $k$ to $\emptyset$. The $k$-sets of the hypercube $[4]$ are represented in Figure~\ref{fig:k-sets}.

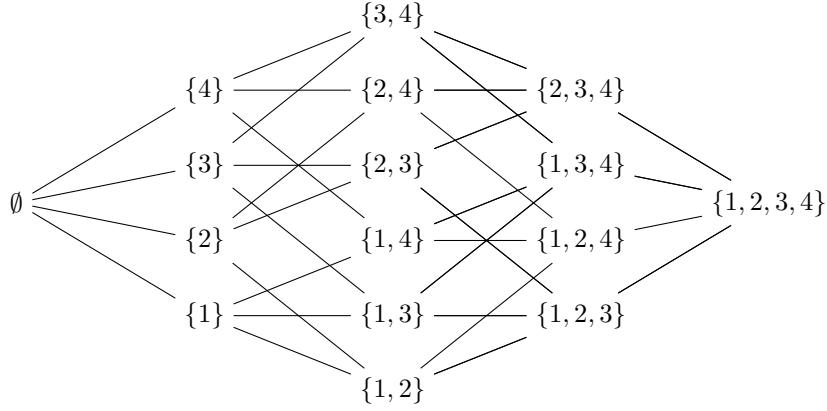
\begin{figure}[htbp]
\begin{center}
\begin{tikzpicture}
    \node (o) at ( -5,0) {$\emptyset$};
    \node (1) at ( -2.5,-1.5) {$\{1\}$};
    \node (2) at ( -2.5,-.5) {$\{2\}$};
    \node (3) at ( -2.5, .5) {$\{3\}$}; 
    \node (4) at ( -2.5, 1.5) {$\{4\}$};  
    \node (12) at ( 0,-2.5) {$\{1,2\}$};
    \node (13) at ( 0,-1.5) {$\{1,3\}$};
    \node (14) at ( 0,-.5) {$\{1,4\}$};    
    \node (23) at ( 0,.5) {$\{2,3\}$};
    \node (24) at ( 0,1.5) {$\{2,4\}$};
    \node (34) at ( 0,2.5) {$\{3,4\}$};    
    \node (123) at ( 2.5,-1.5) {$\{1,2,3\}$};
    \node (124) at ( 2.5,-.5) {$\{1,2,4\}$};
    \node (134) at ( 2.5, .5) {$\{1,3,4\}$}; 
    \node (234) at ( 2.5, 1.5) {$\{2,3,4\}$};
    \node (1234) at (5, 0) {$\{1,2,3,4\}$}; 
       
    \begin{scope}
    \foreach \x in {1, ..., 4}
       \draw (o) -- (\x);

    \foreach \x in {1, ..., 3}
    	\foreach \y in {\x, ..., 3}
    	{
    		\pgfmathtruncatemacro{\y}{\y+1}
    		\draw (\x\y)--(\x);
    		\draw (\x\y)--(\y);
    	}
    	
    \foreach \x in {1, ..., 2}
    	\foreach \y in {\x, ..., 2}
    		\foreach \z in {\y, ..., 3}
    	{
    		\pgfmathtruncatemacro{\y}{\y+1}
    		\pgfmathtruncatemacro{\z}{\z+1}
    		\draw (\x\y\z)--(\x\y);
    		\draw (\x\y\z)--(\x\z);
    		\draw (\x\y\z)--(\y\z);
    		\draw (\x\y\z)--(1234);
    	}

     \end{scope}
\end{tikzpicture}
\caption{The $k$-sets of the hypercube $[4]$.}
\label{fig:k-sets}
\end{center}
\end{figure}

\begin{definition}
  An \emph{odd $k$-set} $S$ of $[n]$ with respect to $\S$ is a $k$-set (not necessarily in $\S$) with an odd number of $k-1$-subsets of $\S$ contained in $S$.
\end{definition}

Stated differently, odd $k$-sets are the sets $S$ such that the constraint $|\S \cap N(S)|$ is even is not satisfied when we restrict $\S$ to the ball of radius $k-1$ centered in $\emptyset$. Therefore, they can be used to deduce the existence of $(k+1)$-subsets of $\S$ as proved in the following lemma.

\begin{lemma}\label{oddsettosets}
Let $k < t-1$.
If there are $s_k$ odd $k$-sets with respect to a minimal $t$-pseudo-border $\S$ then there are at least $\frac{s_k}{k+1}$ sets of size $k+1$ in $\S$.
\end{lemma}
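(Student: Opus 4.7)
The plan is a straightforward double-counting argument that only uses the parity condition from the definition of a $t$-pseudo-border; minimality of $\S$ does not appear to be needed for this particular lemma.

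First, I would unpack what the pseudo-border condition says about a $k$-set $S$. Since neighbors of $S$ in the hypercube are obtained by adding or removing a single element, $N(S)$ partitions into the $k$ subsets of $S$ of size $k-1$ and the $n-k$ supersets of $S$ of size $k+1$. The pseudo-border condition, which applies since $k<t-1\leq t-1$, asserts that $|\S\cap N(S)|$ is even, i.e.\ the number of $(k-1)$-subsets of $S$ in $\S$ plus the number of $(k+1)$-supersets of $S$ in $\S$ is even.

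Next, I would apply this to an odd $k$-set $S$. By definition the number of $(k-1)$-subsets of $S$ lying in $\S$ is odd, so by the parity observation above the number of $(k+1)$-supersets of $S$ lying in $\S$ must also be odd, and in particular at least one. These $(k+1)$-supersets are indeed candidates to belong to $\S$: since $k+1\leq t-1<t$, they lie in the ball of radius $t$ centered at $\emptyset$ to which $\S$ is confined.

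Finally, I would finish with double counting. Consider the set of pairs $(S,T)$ where $S$ is an odd $k$-set of $[n]$ and $T\in\S$ is a $(k+1)$-set with $S\subset T$. The previous paragraph shows every odd $k$-set $S$ contributes at least one pair, so the total count is at least $s_k$. Conversely, each $(k+1)$-set $T\in\S$ has exactly $k+1$ subsets of size $k$, so $T$ appears in at most $k+1$ pairs. Combining these two bounds gives that the number of $(k+1)$-sets in $\S$ is at least $s_k/(k+1)$, as desired.

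There is no real obstacle here; the only subtlety to keep in mind is the ``boundary'' issue of making sure both the pseudo-border equation is available at $S$ (so we need $|S|=k\leq t-1$) and the promised $(k+1)$-supersets can actually sit inside the ball of radius $t$ (so we need $k+1\leq t$). The hypothesis $k<t-1$ gives both with room to spare.
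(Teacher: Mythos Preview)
Your argument is correct and is essentially the same as the paper's: both observe that the parity constraint at an odd $k$-set forces at least one $(k+1)$-superset in $\S$, and then double-count incidences between odd $k$-sets and $(k+1)$-sets in $\S$. You are also right that minimality of $\S$ is not actually used in this lemma.
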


\begin{proof}
  The ball centered at each of these odd $k$-sets contains an even number of elements of $\S$ and therefore contains at least one element of $\S$ of size $k+1$.

  On the other hand, each $k+1$-sets contain $k+1$ $k$-sets and therefore at most $k+1$ odd $k$-sets of $\S$. So we need at least $\frac{s_k}{k+1}$ such sets to satisfy the parity condition for all odd $k$-sets.
\end{proof}

To complete our proof, we now need to lower bound the number of odd $k+1$-sets in terms of the number of $k$-sets. Our proof is inspired by a result of Gromov, independently proven by Linial, Meshulam \cite{linial2006homological}, and Wallach \cite{meshulam2009homological} (and maybe by others). (See also Lemma 3 of \cite{matousek})

\begin{theorem}\label{setstooddsets}
  For every $1 \leq k \leq t-2$, if there are $s_k$ $k$-sets in a minimal $t$-pseudo-border $\S$ then there are at least $\frac{n - (k-1)k}{k+1} s_k$ odd $k+1$-sets with respect to $\S$.
\end{theorem}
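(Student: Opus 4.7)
The plan is to carry out a Gromov-style averaging argument (in the spirit of Linial--Meshulam--Wallach and Matou\v{s}ek's bound on the coboundary expansion of the full simplex) and feed the resulting identity into the cardinality-minimality of $\S$. Write $f_j$ for the $\F_2$-indicator of $\S \cap \binom{[n]}{j}$, $\delta$ for the coboundary operator (raising a cochain's degree by $1$), and $\partial$ for its dual (lowering by $1$). The key structural fact is that flipping $\S$ along a single $(k{-}1)$-set $B$ changes $f_k$ by $\delta \chi_B$ and $f_{k-2}$ by $\partial \chi_B$, so flipping along any family $\mathcal{B}$ of $(k{-}1)$-sets changes $f_k$ and $f_{k-2}$ by $\delta g$ and $\partial g$ respectively, where $g = \chi_\mathcal{B}$.

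For each $v \in [n]$ I define the Gromov cone $g_v \in C^{k-1}$ by $g_v(B) = f_k(B \cup \{v\})$ when $v \notin B$ and $g_v(B) = 0$ otherwise. A direct case analysis on whether $v \in A$ shows that $(f_k + \delta g_v)(A)$ equals $[A \cup \{v\}\ \text{is odd}]$ when $v \notin A$ and $0$ when $v \in A$, so $\|f_k + \delta g_v\|$ is exactly the number of odd $(k{+}1)$-sets containing $v$, and summing gives $\sum_v \|f_k + \delta g_v\| = (k+1)\, o_{k+1}$. A parallel computation at the $(k{-}2)$-layer gives $(\partial g_v)(C) \equiv \alpha_{C \cup \{v\}} \pmod{2}$ for $v \notin C$; combining this with the pseudo-border identity $\alpha_B \equiv \beta_B \pmod{2}$ applied to the $(k{-}1)$-set $B = C \cup \{v\}$ rewrites the right-hand side as the indicator that $C \cup \{v\}$ is an odd $(k{-}1)$-set, hence $\sum_v \|\partial g_v\| = (k-1)\, o_{k-1}$.

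Now I invoke minimality. When $k \geq 3$, every set in the support of $g_v$ has size at least $2$, so Lemma~\ref{lemma:flip-min} applies and the flipped family $\S \Delta N_{g_v}$ is again a $t$-pseudo-border; since $\S$ has minimum cardinality and the flip only touches the $k$- and $(k{-}2)$-layers, I obtain $\|f_k + \delta g_v\| + \|f_{k-2} + \partial g_v\| \geq s_k + s_{k-2}$, which combined with the triangle inequality $\|f_{k-2} + \partial g_v\| \leq s_{k-2} + \|\partial g_v\|$ gives $\|f_k + \delta g_v\| \geq s_k - \|\partial g_v\|$. Summing over $v$ yields $(k+1)\, o_{k+1} \geq n\, s_k - (k-1)\, o_{k-1}$, and applying Lemma~\ref{oddsettosets} at level $k-1$ (valid since $k-1 \leq t-2$) gives $o_{k-1} \leq k\, s_k$, whence $(k+1)\, o_{k+1} \geq (n - k(k-1))\, s_k$ as claimed. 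The base cases $k \in \{1, 2\}$ I treat separately: for $k = 1$, the family $\S \Delta \binom{[n]}{1}$ is itself a $t$-pseudo-border (using that $n$ is even), so minimality forces $s_1 \leq n/2$ and hence $o_2 = s_1(n - s_1) \geq (n/2)\, s_1$; and $k = 2$ is handled by a parity-corrected variant of the Gromov cone, supplementing $g_v$ with a single extra singleton so that the supporting family of $1$-sets has even size and the corresponding flip preserves $\emptyset \in \S$ and all pseudo-border parities. The main conceptual obstacle is the mismatch between the Gromov cone, which naturally produces arbitrary $(k{-}1)$-coboundaries, and the restricted minimality of $\S$ (flips only along sets of size $\geq 2$): this mismatch is precisely what produces the correction $k(k-1)$ instead of the optimal expansion constant $n/(k+1)$ of the classical simplex, and I expect the fiddliest bookkeeping to be in verifying that the parity-corrected modification at $k = 2$ really lands inside the space of valid $t$-pseudo-borders.
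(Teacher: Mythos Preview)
For $k\ge 3$ your argument is the paper's, with only cosmetic differences: both flip along the Gromov cone $E_v=\{B:|B|=k-1,\ B\cup\{v\}\in\S\}$, use that the resulting $k$-layer is exactly $F_v=\{A:v\notin A,\ A\cup\{v\}\text{ is an odd }(k{+}1)\text{-set}\}$, and balance against the $(k{-}2)$-layer change. The paper picks a single minimizing $v$ and bounds that change directly by $(k-1)|E_v|$; you sum over all $v$ and bound it via $\|\partial g_v\|$, the pseudo-border identity, and Lemma~\ref{oddsettosets}. Since $\sum_v(k-1)|E_v|=(k-1)k\,s_k$ while your route gives $(k-1)\,o_{k-1}\le(k-1)k\,s_k$, both land on the same inequality $(k+1)\,o_{k+1}\ge(n-k(k-1))\,s_k$; your detour through $o_{k-1}$ is slightly longer but no less valid.

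You are right to isolate $k\in\{1,2\}$: the paper's appeal to Lemma~\ref{lemma:flip-min} literally requires the flipped sets to have size $k-1\ge 2$, and the paper does not address the small cases. Your $k=1$ argument (complement the singleton layer, using $n$ even) is correct. Your $k=2$ patch, however, does not yield the stated bound. Observe first that the pseudo-border parity at each singleton forces $|\S\cap N(\{v\})|=1+|E_v|$ to be even, so $|E_v|$ is odd for \emph{every} $v$ and your correction is always needed. But adding a singleton $\chi_{\{w\}}$ to $g_v$ XORs the entire star $\{\{w,x\}:x\neq w\}$ into the $2$-layer; since $F_v$ contains only pairs avoiding $v$, this inflates $\|f_2+\delta(g_v+\chi_{\{w\}})\|$ by as much as $n-1$, and after summing over $v$ you lose an additive term of order $n^2$ where only $2s_2$ is affordable. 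No averaging over $w$ recovers this, so the $k=2$ case remains a genuine gap in your write-up (as it is, implicitly, in the paper's own proof).
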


We give a proof of Theorem \ref{setstooddsets} in Section \ref{sect:oddsets}.

\begin{remark}
 Let $t \geq 2$. All 1-sets are odd in a minimal $t$-pseudo-border.
\end{remark}

Since there are exactly $n$ 1-sets, this remark immediately gives us

\begin{corollary}
Let $t \geq 3$. There are at least $\frac{n}{2}$ 2-sets in a minimal $t$-pseudo-border.
\end{corollary}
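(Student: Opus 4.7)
The plan is to read the corollary as a one-line consequence of the preceding Remark together with Lemma \ref{oddsettosets} applied at level $k=1$. First I would unpack the Remark: by Definition \ref{defi:t-pseudo-border} every $t$-pseudo-border $\S$ contains $\emptyset$. A $1$-set $\{i\}\subseteq [n]$ has exactly one $0$-subset, namely $\emptyset$, and that $0$-subset lies in $\S$. So the number of $0$-subsets of $\{i\}$ inside $\S$ is $1$, which is odd, and hence $\{i\}$ is an odd $1$-set for every $i\in [n]$. This gives $s_1=n$ odd $1$-sets.

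Next I would feed this into Lemma \ref{oddsettosets} with $k=1$. The lemma's hypothesis $k<t-1$ becomes $t>2$, which is exactly the assumption $t\geq 3$ of the corollary; this is also precisely what is needed for the parity condition $|\S\cap N(S)|\equiv 0\pmod 2$ to be available at every $1$-set. The conclusion of the lemma produces at least $s_1/(k+1)=n/2$ elements of $\S$ of size $2$, which is the desired bound.

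There is no real obstacle here — this is bookkeeping — but the one thing worth double-checking is the range of $t$: the Remark needs $t\geq 2$ so that the parity condition is active at $\emptyset$ (to ensure $\S$ is not just $\{\emptyset\}$ in a trivial sense), and the application of Lemma \ref{oddsettosets} at $k=1$ needs $t\geq 3$, which is exactly what the corollary assumes. So the whole argument fits in two lines: $s_1=n$ by the Remark, and $|\S\cap\binom{[n]}{2}|\geq s_1/2=n/2$ by Lemma \ref{oddsettosets}.
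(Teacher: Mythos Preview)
Your argument is correct and is exactly the paper's approach: the corollary is stated as an immediate consequence of the Remark (all $1$-sets are odd, giving $s_1=n$) together with Lemma~\ref{oddsettosets} at $k=1$, and you have also correctly matched the hypothesis $k<t-1$ to the assumption $t\geq 3$.
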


We can then apply Theorem \ref{setstooddsets} with $k=2$ to get

\begin{corollary}
  Let $t \geq 4$. There are at least $\frac{n}{2}\frac{n-2}{3} = \frac{n(n-2)}{2\cdot 3}$ odd 3-sets in a minimal $t$-pseudo-border.

  Let $t \geq 5$. There are at least $ \frac{n(n-2)}{2\cdot 3 \cdot 4}$ 4-sets in a minimal $t$-pseudo-border.
\end{corollary}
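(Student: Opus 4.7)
The plan is to apply the two machinery lemmas already proved in the preceding pages in sequence, feeding the output of one into the input of the next, starting from the bound on 2-sets given in the previous corollary. No new combinatorial work is required.

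First I would handle the bound on odd 3-sets. The previous corollary gives us at least $\frac{n}{2}$ 2-sets in a minimal $t$-pseudo-border whenever $t \geq 3$. To promote this count of $k$-sets (with $k=2$) into a count of odd $(k+1)$-sets, I apply Theorem~\ref{setstooddsets}. That theorem requires $1 \leq k \leq t-2$, which for $k=2$ forces $t \geq 4$; this is exactly the hypothesis of the first statement. Plugging $k=2$ and $s_2 = n/2$ into the multiplier $\frac{n-(k-1)k}{k+1} = \frac{n-2}{3}$ yields at least $\frac{n}{2} \cdot \frac{n-2}{3} = \frac{n(n-2)}{2 \cdot 3}$ odd 3-sets, as claimed.

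Next I would convert this count of odd 3-sets into a count of 4-sets of $\S$ itself, using Lemma~\ref{oddsettosets}. That lemma turns $s_k$ odd $k$-sets into at least $\frac{s_k}{k+1}$ sets of size $k+1$ inside $\S$, under the hypothesis $k < t-1$; for $k=3$ this requires $t \geq 5$, matching the hypothesis of the second statement. Dividing the bound $\frac{n(n-2)}{2 \cdot 3}$ on odd 3-sets by $k+1 = 4$ gives at least $\frac{n(n-2)}{2 \cdot 3 \cdot 4}$ elements of $\S$ of size 4.

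There is essentially no obstacle here beyond bookkeeping: the argument is a direct chaining of Corollary (the $\frac{n}{2}$ bound on 2-sets), Theorem~\ref{setstooddsets}, and Lemma~\ref{oddsettosets}, and the only thing to verify is that the parameter thresholds $t \geq 4$ and $t \geq 5$ indeed ensure that each invocation is legal. This same two-step pattern (Theorem~\ref{setstooddsets} to produce odd $(k+1)$-sets, then Lemma~\ref{oddsettosets} to produce $(k+1)$-sets of $\S$) will be iterated in Section~\ref{sect:generallb} to obtain the full exponential bound of Theorem~\ref{theo:bound_pseudo-border}, so writing this corollary out explicitly also serves as a template for that induction.
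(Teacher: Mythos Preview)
Your proposal is correct and follows exactly the approach the paper takes: the corollary is stated immediately after the sentence ``We can then apply Theorem~\ref{setstooddsets} with $k=2$ to get'' and is left without further proof, since it is precisely the chaining of the previous corollary, Theorem~\ref{setstooddsets} (with $k=2$), and Lemma~\ref{oddsettosets} (with $k=3$) that you describe. Your verification of the parameter thresholds $t\ge 4$ and $t\ge 5$ is accurate.
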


\subsection{Lower bounds for $t$-pseudo-borders}\label{sect:generallb}

Combining Theorem \ref{setstooddsets} with Lemma \ref{oddsettosets}, we obtain a lower bound on the number of $k$-sets in any minimal $t$-pseudo-border when $k \le \sqrt{n/2}$.  This concludes the proof of Theorem~\ref{theo:bound_pseudo-border}.

\begin{lemma}\label{cumulative-bound}
  For any minimal $t$-pseudo-border $\S$ and any even $k \le \min\{t-1, \sqrt{n/2}\}$, $\S$ has at least $\frac{n^{k/2}}{2^{k/2}k!}$ $k$-sets.
\end{lemma}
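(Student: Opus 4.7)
The plan is to fold Theorem~\ref{setstooddsets} and Lemma~\ref{oddsettosets} into a single recurrence that jumps from layer $k-2$ to layer $k$ (skipping the intermediate odd layer), then to run a short induction on even $k$. Let $a_k$ denote the number of $k$-sets lying in $\S$. Applying Theorem~\ref{setstooddsets} at level $k-2$ yields at least $\frac{n-(k-3)(k-2)}{k-1}\,a_{k-2}$ odd $(k-1)$-sets, and feeding these directly into Lemma~\ref{oddsettosets} produces
$$
a_k \;\geq\; \frac{n-(k-3)(k-2)}{(k-1)\,k}\,a_{k-2}.
$$
The side conditions of the two ingredients ($k-2\leq t-2$ and $k-1<t-1$) both reduce to $k\leq t-1$, which is precisely what the lemma assumes.

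Next I would use the hypothesis $k\leq\sqrt{n/2}$. Since $(k-3)(k-2)<k^{2}\leq n/2$ whenever $k\geq 2$, the numerator satisfies $n-(k-3)(k-2)\geq n/2$, and the recurrence simplifies to
$$
a_k \;\geq\; \frac{n}{2(k-1)\,k}\,a_{k-2}.
$$

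An induction on even $k$ then finishes the argument. For the base case $k=2$ we already have $a_2\geq n/2$ from the earlier corollary, which comfortably exceeds $\tfrac{n}{2\cdot 2!}=n/4$. For the inductive step, multiplying the hypothesis $a_{k-2}\geq\tfrac{n^{(k-2)/2}}{2^{(k-2)/2}(k-2)!}$ by $\tfrac{n}{2(k-1)k}$ and using $(k-1)\,k\,(k-2)!=k!$ telescopes immediately to $\tfrac{n^{k/2}}{2^{k/2}\,k!}$, which is the bound claimed.

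The proof is essentially mechanical once the two-step recurrence is written down; the only point needing attention is verifying that both the legality range $k\leq t-1$ and the inequality $n-(k-3)(k-2)\geq n/2$ hold throughout the induction, and both are guaranteed by the single hypothesis $k\leq\min\{t-1,\sqrt{n/2}\}$. I do not expect a genuine obstacle beyond this bookkeeping; the heavy lifting is all hidden inside the Gromov-style estimate already recorded as Theorem~\ref{setstooddsets}.
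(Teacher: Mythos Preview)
Your proof is correct and follows essentially the same route as the paper: both combine Theorem~\ref{setstooddsets} (applied at level $k-2$) with Lemma~\ref{oddsettosets} into the two-step recurrence $a_k\ge \tfrac{n}{2(k-1)k}\,a_{k-2}$ and then induct on even $k$. The only cosmetic differences are that the paper anchors the induction at $k=0$ (using $\emptyset\in\S$) rather than at $k=2$, and bounds $(k-1)k\le n/2$ once at the outset rather than $(k-3)(k-2)\le n/2$ inside the step; neither changes the argument.
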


\begin{proof}
  Since $k \le \sqrt{n/2}$, $n - (k-1)k \ge \frac{n}{2}$.

  We prove this by induction on $k$. Since $\emptyset \in \S$, it is true for $k=0$. Suppose this is true for $k-2$. Then by Theorem \ref{setstooddsets}, there are at least $\frac{n}{2(k-1)}\frac{n^{(k-2)/2}}{2^{(k-2)/2}(k-2)!}$ odd $k+1$-sets with respect to $\S$. By Lemma \ref{oddsettosets}, $\S$ contains at least
  \[
  \frac{1}{k}\frac{n}{2(k-1)}\frac{n^{(k-2)/2}}{2^{(k-2)/2}(k-2)!}
  =\frac{n^{k/2}}{2^{k/2}k!}
  \]
  $k$-sets, as required.
\end{proof}

The bound in Lemma \ref{cumulative-bound} is maximized at $k=\sqrt{n/2}$ and by Stirling's formula is at least
\[
e^{\sqrt{n/2}}
\]

\comment{
  \begin{eqnarray*}
  \frac{n}{2} \prod_{k=1, k\text{ even}}^{\sqrt{n/2}} \frac{n - (k-1)k}{k+1}\frac{1}{k+2}
  &=& \frac{n}{2} \prod_{k=1, k\text{ even}}^{\sqrt{n/2}} \frac{n/2}{(k+1)(k+2)}\\
  &=& \frac{n}{2} \prod_{i=1}^{\sqrt{n/8}} \frac{n/2}{(2i+1)(2i+2)}\\
  &\ge& \frac{n^{\sqrt{n/8}}}{2^{\sqrt{n/8}}\sqrt{n/2}!}\\
  &\ge& \frac{n^{\sqrt{n/8}}}{2^{\sqrt{n/8}}(\sqrt{n}/e\sqrt{2})^\sqrt{n/2}}\\
  &=& \left(\frac{n}{2}\frac{2e^2}{n}\right)^{\sqrt{n/8}}\\
  &=& e^{2\sqrt{n/8}} 
  = e^{\sqrt{n/2}}
  \end{eqnarray*}
}

To obtain the lower bound on the size of $t$-pseudo-borders of $[n]$ stated in Theorem~\ref{theo:bound_pseudo-border}, we simply apply Lemma~\ref{cumulative-bound} to all the $k$-sets of a $t$-pseudo-border with $k<t$ and $k \leq \sqrt{n/2}$.

\subsection{Lower bounds for odd sets}\label{sect:oddsets}

This section is devoted to the proof of Theorem \ref{setstooddsets}.

%
%
%

\begin{proof}[Proof of Theorem \ref{setstooddsets}]
Let $\S$ be a minimal $t$-pseudo-border. We denote by $E$ the set of $k$-sets of $\S$ and by $F$ the odd $k+1$-sets with respect to $\S$.
For an element $v \in [n]$, we write $F_v$ for the sets of $F$ containing $v$, each with $v$ removed (so $F_v$ is a set of $k$-sets) and $E_v$ for the set of $k-1$-sets consisting of all elements of $E$ containing $v$ with $v$ itself removed from each $k$-set. 

Let $v$ be the vertex minimizing
  \[
  |F_v| + (k-1) |E_v| \le \frac{1}{n} \sum_v (|F_v| + (k-1) |E_v|)
  = \frac{1}{n} \sum_v |F_v| + \frac{k-1}{n} \sum_v |E_v|
  = \frac{k+1}{n}|F| + \frac{(k-1)k}{n} |E|
  \]

Since $\S$ is minimal, we may flip on $E_v$ and apply Lemma~\ref{lemma:flip-min}.

  We claim this flip yields a family $\S'$ whose $k$-sets is exactly $F_v$. Indeed, if $f\in F_v$ then $f \cup \{v\} \in F$ and so $f \cup \{v\}$ contains an odd number of elements of $E$. But except for $f$ itself, these elements of $E$ all contain $v$. If $f \not \in E$, $f$ contains an odd number of elements of $E_v$ and is therefore added (flipped) to $\S'$. If $f \in E$, $f$ contains an even number of elements of $E_v$ and is therefore not removed (flipped) when building $\S'$.

  On the other hand, if $f \not \in F_v$ then $f \cup \{v\} \not \in F$ which means $f \cup \{v\}$ contains an even number of elements of $E$. But except for $f$ itself, these elements of $E$ all contain $v$. If $f \not \in E$, $f$ contains an even number of elements of $E_v$ and is therefore not added (flipped) to $\S'$. If $f \in E$, $f$ contains an odd number of elements of $E_v$ and is therefore removed (flipped) when building $\S'$.

  The only other sets affected by flipping on $E_v$ are $k-2$ sets and this flips (gains) at most $(k-1) |E_v|$ elements of size $k-2$ (since each element of $E_v$ has size $k-1$ and contains at most $k-1$ elements (of $\S$) of size $k-2$).

  Therefore, by minimality (Lemma~\ref{lemma:flip-min}) $|E| \le |F_v| + (k-1) |E_v| \le \frac{k+1}{n}|F| + \frac{(k-1)k}{n} |E|$.
  Rearranging gives
  \begin{eqnarray*}
  \frac{n - (k-1)k}{n}|E| &\le& \frac{k+1}{n}|F|\\
  (n - (k-1)k)|E| &\le& (k+1)|F|\\
  \frac{n - (k-1)k}{k+1}|E| &\le&|F|
  \end{eqnarray*}
  and the theorem follows.
\end{proof}

\section*{Acknowledgements}

Nicolas Delfosse was supported by the Lockheed Martin Corporation. Nicolas Delfosse acknowledges the hospitality of Robert Raussendorf and the University of British Columbia where part of this article was written.


\end{document}